\theoremstyle{plain}
\theoremstyle{plain}
\theoremstyle{plain}
\newtheorem{thm}{\protect\theoremname}
\theoremstyle{plain}
\theoremstyle{definition}
\theoremstyle{definition}
\theoremstyle{definition}
\newtheorem{rem}{\protect\remarkname}
\providecommand{\lemmaname}{Lemma}
\providecommand{\propositionname}{Proposition}
\providecommand{\theoremname}{Theorem}
\providecommand{\corollaryname}{Corollary} 
\providecommand{\definitionname}{Definition}
\providecommand{\assumptionname}{Assumption}
\providecommand{\remarkname}{Remark}
\newcommand{\overbar}[1]{\mkern 1.25mu\overline{\mkern-1.25mu#1\mkern-0.25mu}\mkern 0.25mu}
\newcommand{\openone}{\mathds{1}}
\newcommand{\Hg}{\mathrm{Hypergeometric}}
\newcommand{\TV}{\mathrm{TV}}
\newcommand{\dTV}{d_{\mathrm{TV}}}
\newcommand{\Bi}{\mathrm{Binomial}}
\newcommand{\Sdif}{S_{\mathrm{dif}}}
\newcommand{\Seq}{S_{\mathrm{eq}}}
\newcommand{\sdif}{s_{\mathrm{dif}}}
\newcommand{\seq}{s_{\mathrm{eq}}}
\newcommand{\Bernoulli}{\mathrm{Bernoulli}}
\newcommand{\pebar}{\overbar{P}_{\mathrm{e}}}
\newcommand{\pe}{P_{\mathrm{e}}}
\newcommand{\Xv}{\mathbf{X}}
\newcommand{\Yv}{\mathbf{Y}}
\newcommand{\Uc}{\mathcal{U}}
\newcommand{\EE}{\mathbb{E}}
\newcommand{\PP}{\mathbb{P}}
\providecommand{\algorithmname}{Algorithm}
\newcommand{\manuallabel}[2]{\def\@currentlabel{#2}\label{#1}}
\newcommand{\Veq}{V_{\mathrm{eq}}}
\newcommand{\Vdif}{V_{\mathrm{dif}}}
\newcommand{\veq}{v_{\mathrm{eq}}}
\begin{document}

\title{Converse Bounds for Noisy Group Testing \\  with Arbitrary Measurement Matrices} 
 
\author{
 \IEEEauthorblockN{Jonathan Scarlett and Volkan Cevher}
  \IEEEauthorblockA{Laboratory for Information and Inference Systems (LIONS) \\
    \'Ecole Polytechnique F\'ed\'erale de Lausanne (EPFL) \\
    Email: \{jonathan.scarlett,volkan.cevher\}@epfl.ch}
} 

\maketitle

\begin{abstract}
    We consider the group testing problem, in which one seeks to identify a subset of defective items within a larger set of items based on a number of noisy tests.  While matching achievability and converse bounds are known in several cases of interest for i.i.d.~measurement matrices, less is known regarding converse bounds for arbitrary measurement matrices.  We address this by presenting two converse bounds for arbitrary matrices and general noise models.  First, we provide a strong converse bound ($\PP[\mathrm{error}] \to 1$) that matches existing achievability bounds in several cases of interest.  Second, we provide a weak converse bound ($\PP[\mathrm{error}] \not\to 0$) that matches existing achievability bounds in greater generality.
\end{abstract}

\section{Introduction}

The group testing problem consists of determining a small subset of ``defective'' items within a larger set of items $\{1,\dotsc,p\}$. This problem has a history in areas such as medical testing and fault detection, and has regained significant attention with following new applications in areas such as communication protocols \cite{Ant11}, pattern matching \cite{Cli10}, and database systems \cite{Cor05}, and new connections with compressive sensing \cite{Gil08,Gil07}.

Let the items be labeled as $\{1,\dotsc,p\}$, and let $S$ be the subset of defective items.  We consider a general group testing model in the observation $Y$ associated with a single test is randomly generated according to
\begin{equation}
    \PP[Y=y\,|\,X=x,S=s] = P_{Y|X_S}(y|x_s) = P_{Y|V_S}(y|v_s), \label{eq:model} \\
\end{equation}
where
\begin{equation}
    V_S := \sum_{i \in S} X_i
\end{equation}
counts the number of defective items in the test, and where the measurement vector $X = (X_1,\dotsc,X_p) \in \{0,1\}^p$ indicates which items are included in the test.  While our techniques allow for arbitrary finite output alphabets, we focus on the binary case $Y\in\{0,1\}$ for concreteness.  In the noiseless setting, we simply have $Y = \openone\{V_S > 0\}$.  Additive modulo-2 noise models of the form $Y = \openone\{V_S > 0\} \oplus Z$ are also common, but \eqref{eq:model} is more general, permitting other forms of dependence on $V_S$ such as that of dilution noise \cite{Ati12}.
 
The goal is to recover $S$ based on a number $n$ of independent non-adaptive tests, with the $i$-th measurement vector being $X^{(i)}$ and the $i$-th observation being $Y^{(i)}$.  We henceforth let $\Xv$ denote the $n \times p$ matrix whose $i$-th row is $X^{(i)}$, and let $\Yv$ be the $n$-dimensional binary vector whose $i$-th entry is $Y^{(i)}$.  We consider a fixed number $k$ of defective items, and assume that the support set $S$ is uniform over the subsets of $\{1,\dotsc,p\}$ with cardinality $k$.  For a fixed measurement matrix $\Xv$, the error probability is given by 
\begin{equation}
    \pe(\Xv) = \PP[\hat{S} \ne S], \label{eq:pe}
\end{equation}
where $\hat{S}$ is the estimate of $S$ based on $\Xv$ and $\Yv$, and the probability is with respect to the randomness in $S$ and $\Yv$.

The information-theoretic limits of this problem have been studied for decades (e.g., see \cite{Mal78,Mal13}), and have recently become increasingly well-understood \cite{Ati11,Tan14,Laa14,Sca15,Sca15b,Ald15}.  In particular, an exact asymptotic threshold is known in several cases of interest when we consider the error probability $\pebar := \EE[\pe(\Xv)]$ \emph{averaged over an i.i.d.~Bernoulli matrix $\Xv$ with $\PP[X_{ij} = 1] = \nu/k$} ($\nu > 0$).  Specifically, in a broad range of scaling regimes with $k = o(p)$, we have  $\pebar \to 0$ if \cite{Sca15b}
 \begin{equation}
     n \ge \max_{\ell=1,\dotsc,k} \frac{\ell \log \frac{p}{\ell}}{ I(X_{\sdif} ; Y | X_{\seq} ) } (1+\eta), \label{eq:ach}
 \end{equation}
and $\pebar \to 1$ if
\begin{equation}
    n \le \max_{\ell=1,\dotsc,k} \frac{\ell \log \frac{p}{\ell} }{ I(X_{\sdif} ; Y | X_{\seq} ) } (1-\eta). \label{eq:cnv}
\end{equation}
In both of these equations, $(\sdif,\seq)$ denotes an arbitrary partition of a fixed defective set $s$ with $|\sdif| = \ell$ (see \cite{Sca15b} for further intuition), and the mutual information is with respect to the independent random vectors $(X_{\sdif},X_{\seq})$ of sizes $(\ell,k-\ell)$ containing independent $\Bernoulli(\nu/k)$ entries, and the model in \eqref{eq:model} with $s = \sdif \cup \seq$.

The main goal of this paper is to obtain variants of the converse bound with an additional optimization over $\nu$, i.e.,
\begin{equation}
    n \le \min_{\nu \in [0,k]} \max_{\ell=1,\dotsc,k} \frac{\ell \log \frac{p}{\ell} }{ I(X_{\sdif} ; Y | X_{\seq} ) } (1-\eta), \label{eq:cnv_gen}
\end{equation}
in the case of \emph{arbitrary} measurement matrices, rather than i.i.d.~measurement matrices.\footnote{ Although the measurement matrix $\Xv$ may be arbitrary, our final results are still written in terms of random vectors $X_{\sdif}$ and $X_{\seq}$ having independent Bernoulli entries.  These are not \emph{directly} related to $\Xv$ itself.}  We briefly mention some existing works in this direction:
\begin{itemize}
    \item For the noiseless setting $Y = \openone\{V_S > 0\}$, the threshold in \eqref{eq:cnv_gen} simplifies to $\big(k\log_2\frac{p}{k}\big) (1 - \eta)$ \cite{Sca15b}, and the converse holds for arbitrary matrices by the so-called \emph{counting bound} \cite{Bal13,Joh15}.
    \item For the symmetric noise model $Y = \openone\{V_S > 0\} \oplus Z$ with $Z \sim \Bernoulli(\rho)$ for some $\rho \in (0,1)$, the threshold in \eqref{eq:cnv} simplifies to $\frac{k\log_2\frac{p}{k}}{\log 2 - H_2(\rho)} (1 - \eta)$ \cite{Sca15b}, where $H_2(\rho) := -\rho\log \rho - (1-\rho)\log(1-\rho)$ is the binary entropy function in nats. Moreover, the converse remains valid for arbitrary matrices. This can be proved by combining the analysis of \cite{Sca15b} with a simple symmetry argument on the information-density random variables, or can alternatively be obtained from a non-asymptotic bound given in \cite{Joh15}.
    \item For general noise models, a \emph{weak converse} statement corresponding to $\ell = k$ (i.e., $\seq = \emptyset$) is known for arbitrary matrices \cite{Mal78} (i.e., showing $\pe(\Xv) \not\to 0$ as opposed to the strong converse $\pe(\Xv) \to 1$). 
    \item After the initial preparation of this work, we learned that a result similar to our second one (Theorem \ref{thm:weak} below) was presented in the Russian literature \cite[pp.~630-631]{Mal13}, giving a weak converse for the case $\seq \ne \emptyset$.  However, the proof techniques appear to be significantly different, and the focus therein is on the case that $k$ does not scale with $p$, in contrast with our work.
\end{itemize}

\subsection{Contributions}

In this paper, we prove a strong converse corresponding to $\ell = k$ for arbitrary matrices, and we prove a weak converse for all $\ell = 1,\dotsc,k$.  Note that the former of these is of interest since $\ell = k$ often achieves the maximum in \eqref{eq:cnv_gen}; this is true for the noiseless model and the symmetric noise model \cite{Mal78,Sca15b}, and our numerical investigations suggest that it is also the case when $P_{Y|V_S}$ corresponds to passing $\openone\{V_S > 0\}$ through a Z-channel \cite{Cov01}.  However, there are known cases where only smaller values of $\ell$ achieve the maximum \cite{Mal80}.
 
\subsection{Notation}
 
 We write $\Xv_{S}$ to denote the submatrix of $\Xv$ containing the columns indexed by $S$.  The complement with respect to the set $\{1,\dotsc,p\}$ is denoted by $(\cdot)^c$, and similarly for $X_S^{(i)}$.   For a given joint distribution $P_{XY}$, the corresponding marginal distributions are denoted by $P_{X}$ and $P_{Y}$, and similarly for conditional marginals (e.g.,~$P_{Y|X}$).  We use usual notations for the entropy and mutual information (e.g. $H(X)$, $I(X;Y|Z)$).  We make use of the standard asymptotic notations $O(\cdot)$, $o(\cdot)$, $\Theta(\cdot)$, $\Omega(\cdot)$ and $\omega(\cdot)$.  We define the function $[\cdot]^+ = \max\{0,\cdot\}$, and write the floor function as $\lfloor\cdot\rfloor$.  The function $\log$ has base $e$.  The total variation (TV) distance between two probability mass functions is written as $\dTV(P,Q)$.
 
 \section{Strong Converse for $\seq = \emptyset$}
 
Our first main result is as follows.

\begin{thm} \label{thm:strong}
    Consider any observation model $P_{Y|V_S}$, and define $I_s^* := \max_{\nu \in [0,k]} I(X_s;Y)$, where $X_{s}$ has i.i.d.~$\Bernoulli(\nu/k)$ entries. For any sequence of measurement matrices $\Xv$ (indexed by $p$), we have 
    \begin{equation}
        \pe(\Xv) \ge 1 - O\bigg( \frac{1}{n (I_s^*)^2} \bigg)
    \end{equation}
    provided that
    \begin{equation}
        n \le \frac{\log {p \choose k} }{ I_s^* } (1-\eta),
    \end{equation}
    for arbitrarily small $\eta > 0$.
\end{thm}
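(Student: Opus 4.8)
The plan is to lower-bound the error probability of the optimal (maximum-likelihood) decoder through an information-spectrum / meta-converse argument. For the uniform prior over the $M=\binom{p}{k}$ candidate supports and \emph{any} auxiliary output law $Q_{\Yv}$, a standard change-of-measure step — bounding the success probability by splitting on whether $P_{\Yv|S}(\yv|s)$ exceeds $c\,Q_{\Yv}(\yv)$ and using that the decoding regions are disjoint (so $\sum_s\openone\{\yv\in\Dc_s\}\le1$) — yields, with $c=\binom{p}{k}e^{-\gamma}$,
\begin{equation}
\pe(\Xv) \ge 1 - e^{-\gamma} - \PP\Big[\log\tfrac{P_{\Yv|S}(\Yv|S)}{Q_{\Yv}(\Yv)} > \log\tbinom{p}{k} - \gamma\Big].
\end{equation}
It then suffices to exhibit a $Q_{\Yv}$ for which the information density $\imath:=\log\frac{P_{\Yv|S}(\Yv|S)}{Q_{\Yv}(\Yv)}$ falls below $\log\binom{p}{k}-\gamma$ with probability $1-O\big(1/(n(I_s^*)^2)\big)$, while $\gamma\to\infty$ slowly enough that $e^{-\gamma}$ is negligible.

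I would take $Q_{\Yv}=P_{\Yv}$, the true output marginal under the uniform prior, so that $\EE[\imath]=I(S;\Yv)$. The key device for bounding this mean is that, for each test $i$, the observation $Y^{(i)}$ depends on $S$ only through the count $V_S^{(i)}=\sum_{j\in S}X_{ij}$, whose marginal is $\Hg$ (the number of defectives among the $m_i:=\sum_j X_{ij}$ tested items). Since $k=o(p)$, this hypergeometric law is within vanishing total variation of $\Bi(k,\nu_i/k)$ with $\nu_i:=km_i/p\in[0,k]$, so by continuity of mutual information for the fixed binary-output channel $P_{Y|V}$ we obtain $I(V_S^{(i)};Y^{(i)})\le I_s^*$ up to a lower-order correction. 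Because the $Y^{(i)}$ are conditionally independent given $S$, subadditivity of entropy gives $I(S;\Yv)\le\sum_i I(V_S^{(i)};Y^{(i)})\le nI_s^*(1+o(1))$. Combined with the hypothesis $n\le\frac{\log\binom{p}{k}}{I_s^*}(1-\eta)$, the threshold exceeds the mean by a multiplicative factor, leaving a gap $\log\binom{p}{k}-\gamma-\EE[\imath]=\Theta(\eta\,nI_s^*)$.

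With the mean and the gap in hand, I would close the argument by a second-moment (Chebyshev) bound: if $\var[\imath]=O(n)$, then $\PP[\imath>\log\binom{p}{k}-\gamma]\le\var[\imath]/\text{gap}^2=O(n)/\Theta((nI_s^*)^2)=O\big(1/(n(I_s^*)^2)\big)$, exactly the claimed rate; one then picks $\gamma=\omega(1)$ with $\gamma=o(nI_s^*)$, so that in the regime $n(I_s^*)^2\to\infty$ (where the claim is non-trivial) $e^{-\gamma}$ is dominated by $O(1/(n(I_s^*)^2))$ and the gap stays $\Theta(nI_s^*)$. I expect this variance/tail estimate to be the main obstacle, and to be where the arbitrariness of $\Xv$ genuinely bites. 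The per-test terms are independent \emph{given} $S$ but not marginally, since all tests share the same random support; moreover, the tempting simplification of replacing $P_{\Yv}$ by the product of per-test marginals — which reproduces the same mean — is \emph{not} admissible here, because for adversarial matrices (e.g.\ many repeated rows) that product surrogate can inflate $\var[\imath]$ to order $n^2$ and destroy the bound. Retaining the genuine marginal $P_{\Yv}$ avoids this (its self-normalization keeps $\imath$ of order $\log n$ in such degenerate cases) but makes $\var[\imath]$ harder to control directly; I anticipate needing a careful conditional-variance decomposition that exploits boundedness of the binary-output information density together with concentration of the count profile $\{V_S^{(i)}\}_i$ under $k=o(p)$. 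The remaining bookkeeping — absorbing the $o(1)$ slacks into $\eta$ — is routine.
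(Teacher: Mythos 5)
Your proposal reproduces the paper's starting point (the change-of-measure/information-spectrum bound with threshold $\log{p \choose k}-\gamma$ is exactly the bound the paper imports from [Sca15b], with $\delta_1=e^{-\gamma}$), but it then diverges at the one step that carries all the difficulty, and that step is left unproven: you take $Q_{\Yv}=P_{\Yv}$ (the true, non-product output law) and assert that the argument closes \emph{if} $\var[\imath]=O(n)$, while explicitly acknowledging you do not know how to establish this. That is not routine bookkeeping; it is the crux. With a non-product $Q_{\Yv}$, the density $\imath$ does not decompose into a sum of conditionally independent terms, and in the decomposition $\var[\imath]=\EE[\var[\imath\,|\,S]]+\var[\EE[\imath\,|\,S]]$ the second term is the variance over $s$ of $D(P_{\Yv|S=s}\|P_{\Yv})$, which a priori is only bounded by $\big(\log{p \choose k}\big)^2=\Theta\big((nI_s^*)^2\big)$ --- exactly the order of the squared gap, so Chebyshev then yields only a constant, not the claimed $O\big(1/(n(I_s^*)^2)\big)$. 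Proving genuine concentration of these divergences for \emph{arbitrary} (possibly highly degenerate) matrices is an open problem within your proposal, so the proof is incomplete.

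Moreover, the reason you paint yourself into this corner is a mistaken dismissal of product auxiliary distributions. The lower bound you derived is an \emph{average over $s$ of conditional probabilities}, so one is free to apply Chebyshev conditionally on each $S=s$; the $\Theta(n^2)$ variance inflation you describe for repeated-row matrices afflicts only the \emph{unconditional} variance. This is precisely the paper's route: it fixes a single product distribution $Q_Y^n$ where $Q_Y$ is the capacity-achieving output distribution of the ``channel'' $P_{Y|V_S}$ (not the induced per-test marginals), under which, given $S=s$, the per-test terms are independent with $O(1)$ variance (since $Q_Y(0),Q_Y(1)$ are bounded away from zero), giving $\sigma_n^2(s)=O(n)$ uniformly in $s$. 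The conditional mean is then the issue --- for adversarial $\Xv$ the empirical distribution of the counts $V_s^{(i)}$ across tests is arbitrary, \emph{not} hypergeometric, so your hypergeometric-to-binomial TV step (which is essentially the paper's Theorem~\ref{thm:weak} machinery at $\ell=k$, and only controls the average over a uniformly random $S$) does not apply per $s$; the paper instead invokes the saddle-point property of the capacity-achieving $Q_Y$, namely $\sum_y P_{Y|V_S}(y|v)\log\frac{P_{Y|V_S}(y|v)}{Q_Y(y)}\le I_s^*$ for every $v$, to get $\mu_n(s)\le nI_s^*$ uniformly in $s$ and in the empirical distribution of the counts. In short: your mean bound is fine but wedded to a choice of $Q_{\Yv}$ whose variance you cannot control, whereas conditioning on $s$ plus the fixed capacity-achieving product output law resolves both mean and variance simultaneously.
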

 
\begin{rem}
    Typically in the case that $\seq = \emptyset$ we have $I_s^* = \Theta(1)$, and hence the remainder term $O\big( \frac{1}{n (I_s^*)^2} \big)$ behaves as $O\big(\frac{1}{n}\big)$, in which case this lower bound on the error probability yields the strong converse statement $\pe(\Xv) \to 1$.
\end{rem}
 
\begin{proof}[Proof of Theorem \ref{thm:strong}]
     Let $\Xv \in \{0,1\}^{n \times p}$ be a fixed measurement matrix.  The analysis of \cite{Sca15b} shows that
     \begin{multline}
         \pe(\Xv) \ge \sum_{s} \frac{1}{{p \choose k}} \PP\bigg[ \sum_{i=1}^n \log \frac{P_{Y|X_S}(Y^{(i)}|X_s^{(i)})}{Q_Y(Y^{(i)})} \le \log{{p \choose k}}  \\ + \log \delta_1 \,\Big|\, \Xv, S=s \bigg] - \delta_1,
     \end{multline}
     where $Q_Y$ is an arbitrary auxiliary output distribution.  Specifically, this was proved in \cite{Sca15b} for the case that $\Xv$ is i.i.d.~and $Q_Y$ is an induced output distribution, but the proof reveals this more general form.
     
     Letting $\mu_n(s)$ and $\sigma_n^2(s)$ denote the mean and variance of $\sum_{i=1}^n \log \frac{P_{Y|X_S}(Y^{(i)}|X_s^{(i)})}{Q_Y(Y^{(i)})}$ for a given defective set $s$, we obtain from Chebyshev's inequality that $\pe \ge 1 - \sum_{s} \frac{1}{{p \choose k}} \frac{\sigma_n(s)^2}{(n \Delta I_s^*)^2} - \delta_1$ provided that $\log{{p \choose k}} + \log \delta_1 \le \mu_n(s) + n\Delta I_s^*$ for all $s$; here $\Delta \in (0,1)$ is arbitrary for now.
     
    The mean is directly computed as
    \begin{align}
        \mu_n(s) &= \sum_{i=1}^n \sum_{y} P_{Y|X_S}(y|X_s^{(i)}) \log \frac{P_{Y|X_S}(y|X_s^{(i)})}{Q_Y(y)} \\
            & = \sum_{i=1}^n \sum_{y} P_{Y|V_S}(y|V_s^{(i)}) \log \frac{P_{Y|V_S}(y|V_s^{(i)})}{Q_Y(y)} \\
            & = n\sum_{v_s,y} P^{(s)}_{V_S}(v_s)P_{Y|V_S}(y|v_s)\log\frac{P_{Y|V_S}(y|v_s)}{Q_Y(y)}. \label{eq:mu_bound3}
    \end{align}
    where $V_s^{(i)} := \sum_{j \in s} X_j^{(i)}$, and $P^{(s)}_{V_S}$ is the empirical distribution of $V_S$ across the $n$ tests for a given choice of $s$.  Choosing $Q_Y$ to be the unique capacity-achieving output distribution of the ``channel'' $P_{Y|V_S}$, it follows from \eqref{eq:mu_bound3} and a well-known saddlepoint result on the mutual information \cite[Thm.~4.4]{Pol14} that, for all sets $s$ having cardinality $k$, we have
    \begin{equation}
        \mu_n(s) \le nI_s^*, \label{eq:mu_bound}
    \end{equation}
    where $I_s^*$ is defined in the theorem statement.
    
    We claim that the corresponding variance behaves as 
    \begin{equation}
        \sigma_n^2(s) = O(n). \label{eq:var_bound}
    \end{equation}
    This was shown for the case that $Q_Y$ equals an induced output distribution in \cite[App.~A]{Sca15}, but the analysis reveals that the same holds true for any $Q_Y$ such that $\min\{Q_Y(0),Q_Y(1)\}$ is bounded away from zero.
    
    The proof of Theorem \ref{thm:strong} is concluded by combining \eqref{eq:mu_bound} and \eqref{eq:var_bound} with the above-mentioned application of Chebyshev's inequality, and choosing $\eta$ such that $1 - \eta < \frac{1}{1 + \Delta}$, Since $\Delta$ can be arbitrarily small, the same is true of $\eta$.
\end{proof}

\section{Weak Converse for $\seq \ne \emptyset$}

Our second main result is as follows.
\begin{thm} \label{thm:weak}
    For any observation model $P_{Y|V_S}$ and sequence of measurement matrices $\Xv$ (indexed by $p$), we have $\pe(\Xv) \not\to 0$ provided that
    \begin{equation}
        n \le \max_{\ell=1,\dotsc,k} \min_{\nu \in [0,k]} \frac{ {p-k+\ell \choose \ell} }{ I(X_{\sdif} ; Y | X_{\seq} ) + \Delta_{\ell} } (1-\eta) \label{eq:cnv2}
    \end{equation}
    for some $\eta > 0 $, where
    \begin{equation}
        \Delta_{\ell} = C_0\frac{\ell(k-\ell)}{p} \max\Big\{1, \log\frac{p}{\ell(k-\ell)} \Big\}
    \end{equation}
    for some universal constant $C_0$, and the mutual information is with respect to the pair $(X_{\sdif},X_{\seq})$ having i.i.d.~$\Bernoulli(\nu/k)$ entries, along with \eqref{eq:model}.
\end{thm}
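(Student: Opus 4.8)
The plan is to combine a genie-aided reduction with Fano's inequality, and then to reduce the resulting mutual information to the single-letter quantity $I(X_{\sdif};Y|X_{\seq})$ by approximating the relevant hypergeometric laws by binomials. First I would generate the support by drawing $\seq$ uniformly among the $(k-\ell)$-subsets and $\sdif$ uniformly among the $\ell$-subsets of the remaining items, so that $S=\seq\cup\sdif$ is uniform over $k$-subsets as required. I then hand the decoder a genie that reveals $\seq$. Since revealing $\seq$ cannot increase the optimal error probability, $\pe(\Xv)$ is at least the error of the best decoder that knows $\seq$, whose remaining task is to identify $\sdif$ among the $M_\ell := \binom{p-k+\ell}{\ell}$ equally likely $\ell$-subsets of $\seq^c$. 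Conditional Fano applied to this sub-problem gives
\begin{equation}
\pe(\Xv)\ge 1-\frac{I(\sdif;\Yv\,|\,\seq)+\log 2}{\log M_\ell},
\end{equation}
where the mutual information is evaluated for the fixed matrix $\Xv$, with $\sdif$ uniform given $\seq$ and $\Yv$ generated through \eqref{eq:model}. Because $\log M_\ell\to\infty$, it suffices to show $I(\sdif;\Yv\,|\,\seq)\le n\big(\max_{\nu\in[0,k]}I(X_{\sdif};Y|X_{\seq})+\Delta_\ell\big)$: reading the numerator of \eqref{eq:cnv2} as $\log M_\ell$ (matching the nats of the denominator), the hypothesis $n(\max_\nu I+\Delta_\ell)\le(\log M_\ell)(1-\eta)$ then forces $\liminf_p \pe(\Xv)\ge\eta>0$.

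Next I would tensorize. Conditioned on the full support $S$ and on the deterministic matrix $\Xv$, the observations $Y^{(1)},\dotsc,Y^{(n)}$ are independent, so the standard bound for conditionally independent outputs yields $I(\sdif;\Yv\,|\,\seq)\le\sum_{i=1}^n I(\sdif;Y^{(i)}\,|\,\seq)$. For a single test $i$, only the counts matter: writing $W_{\seq}^{(i)}=\sum_{j\in\seq}X_j^{(i)}$ and $U_{\sdif}^{(i)}=\sum_{j\in\sdif}X_j^{(i)}$, we have $Y^{(i)}\sim P_{Y|V_S}(\cdot\,|\,W_{\seq}^{(i)}+U_{\sdif}^{(i)})$, so $I(\sdif;Y^{(i)}\,|\,\seq)$ depends on $(\sdif,\seq)$ only through the joint law of $(W_{\seq}^{(i)},U_{\sdif}^{(i)})$. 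Under the random choice of $S$, this joint law is the multivariate hypergeometric obtained by drawing $k-\ell$ and $\ell$ columns without replacement from row $i$ of $\Xv$, which has $m_i:=\sum_j X_j^{(i)}$ ones; this is exactly the structure that the single-letter conditional mutual information $I(X_{\sdif};Y|X_{\seq})=I(\sum X_{\sdif};Y\,|\,\sum X_{\seq})$ has in the i.i.d.\ model, except with binomials in place of hypergeometrics.

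The crux, and the step I expect to be the main obstacle, is the per-test comparison with the i.i.d.\ model. I would match densities by setting $\nu_i:=km_i/p\in[0,k]$, so that in the $\Bernoulli(\nu_i/k)$ model each defective column is active with probability $m_i/p$, matching the hypergeometric marginals; there $(\sum X_{\seq},\sum X_{\sdif})\sim\Binomial(k-\ell,\nu_i/k)\times\Binomial(\ell,\nu_i/k)$ independently. I would then bound $\dTV$ between the multivariate $\Hg$ law of $(W_{\seq}^{(i)},U_{\sdif}^{(i)})$ and this product-binomial, a discrepancy that accounts both for sampling without replacement and for the cross-dependence between the two groups, by a quantity of order $\tfrac{\ell(k-\ell)}{p}$. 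Since the channel $P_{Y|V_S}$ and the binary output are fixed, transferring this input perturbation to the mutual information via a continuity-of-mutual-information estimate gives $\big|I(\sdif;Y^{(i)}\,|\,\seq)-I(X_{\sdif};Y|X_{\seq})\big|\le\Delta_\ell$, with the logarithmic factor in $\Delta_\ell$ arising precisely because the binary entropy $H(Y)$ is not Lipschitz near its endpoints, so the bound carries the characteristic $\dTV\log\frac{1}{\dTV}$ form with $\dTV\asymp\tfrac{\ell(k-\ell)}{p}$ (the $\max\{1,\cdot\}$ guarding the regime where $\ell(k-\ell)$ is comparable to $p$). Bounding $I(X_{\sdif};Y|X_{\seq})$ at $\nu_i$ by its maximum over $\nu\in[0,k]$ then yields the per-test estimate $I(\sdif;Y^{(i)}\,|\,\seq)\le\max_{\nu}I(X_{\sdif};Y|X_{\seq})+\Delta_\ell$ uniformly in $i$, independent of the row densities of $\Xv$.

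Finally I would sum the $n$ per-test bounds to obtain $I(\sdif;\Yv\,|\,\seq)\le n\big(\max_\nu I(X_{\sdif};Y|X_{\seq})+\Delta_\ell\big)$, substitute into the Fano bound, and let $p\to\infty$ to conclude $\pe(\Xv)\not\to0$. The delicate points to verify are the exact constant and alphabet dependence in the mutual-information continuity estimate (to justify the universal $C_0$ together with the binary-entropy log factor) and the sharp $O(\ell(k-\ell)/p)$ total-variation bound for the multivariate hypergeometric against the matched product-binomial; the genie reduction, the tensorization, and the density matching are otherwise routine.
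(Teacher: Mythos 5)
Your proposal is correct and follows essentially the same route as the paper's proof: a genie-aided Fano reduction, tensorization to per-test mutual informations that depend only on the row weights $m^{(i)}$, a hypergeometric-to-binomial total-variation approximation of order $\frac{\ell(k-\ell)}{p}$, and a continuity-of-mutual-information estimate carrying the $\dTV \log \frac{1}{\dTV}$ factor---precisely the two lemmas the paper establishes in its appendix. The only difference is the final bookkeeping: the paper single-letterizes the average $\frac{1}{n}\sum_{i=1}^n I(\Vdif^{\prime(i)};Y^{(i)}|\Veq^{\prime(i)})$ via a time-sharing variable $U$, obtaining the stronger threshold \eqref{eq:cnv_modified} and then weakening it by a min-max swap, whereas you bound each per-test term by $\max_{\nu} I(X_{\sdif};Y|X_{\seq}) + \Delta_\ell$ directly, which yields \eqref{eq:cnv2} (though not the sharper \eqref{eq:cnv_modified}) and suffices for the theorem as stated.
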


\begin{rem}
    The remainder term $\Delta_{\ell}$ is typically (but not always) dominated by the mutual information; for example, if the mutual information is $\Theta(1)$ then this is true when $k = O(p^{\theta})$ for some $\theta < \frac{1}{2}$, regardless of the value of $\ell$. 
\end{rem}

\begin{rem}
    The min-max ordering in \eqref{eq:cnv2} is the opposite of that in \eqref{eq:cnv_gen}, thus making it a potentially weaker threshold.  However, in the proof we also show that the threshold can be improved to 
    \begin{equation}
        \min_{\Uc,P_U, P_{X|U}} \max_{\ell} \frac{\log{ p-k+\ell \choose \ell }}{ I(X_{\sdif};Y|X_{\seq},U) + \Delta_{\ell}  } (1-\eta), \label{eq:cnv_modified}
    \end{equation}
    thus recovering the correct min-max ordering, but with an additional random variable $U$ on a finite alphabet $\Uc$.  This threshold can be shown to be achievable (hence establishing that \eqref{eq:cnv_modified} is a tight bound) in a broad range of scaling regimes using i-non-i.d. coding: Fix a sequence $(u_1,\cdots,u_n)$ with empirical distribution $P_U$, and then generate the $i$-th row according to an i.i.d.~Bernoulli distribution $P_{X|U}(\cdot|u_i)$ whose parameter $\nu$ may depend on $u_i$.  The achievability analysis then follows that in \cite{Sca15b}.
    
    We have chosen to state the theorem in terms of the weakened threshold \eqref{eq:cnv2} since it bears a stronger resemblance to the more familiar threshold \eqref{eq:cnv_gen}, and since we are not aware of any cases in which there is a gap between the two.
\end{rem}

\begin{proof}[Proof of Theorem \ref{thm:weak}]

    The proof is given in four steps.
    \vspace{1mm}
    
    {\em Step 1 (Fano's Inequality):}
    The starting point of our analysis is a necessary condition for $\pe(\Xv) \to 0$ based on Fano's inequality and a genie argument, which follows directly from the analysis of \cite{Ati12} (see also \cite[Sec.~III-D]{Sca15}).  Specifically, fixing $\ell = 1,\dotsc,k$ and letting the \emph{revealed indices of $S$} (denoted $\Seq$) be uniform on the set of subsets $\{1,\dotsc,p\}$ of size $k-\ell$, and letting the \emph{non-revealed indices of $S$} (denoted $\Sdif$) be uniform on the set of subsets of $\{1,\dotsc,p\} \backslash \Seq$ of size $\ell$, it is necessary that
    \begin{equation}
        1 \ge \frac{ \log{{p-k+\ell \choose \ell }} }{ I(\Sdif;\Yv | \Seq) } (1+o(1)). \label{eq:Fano_initial}
    \end{equation}
    We upper bound the mutual information by writing
    \begin{align}
        I(\Sdif;\Yv | \Seq) 
            &\le \sum_{i=1}^n I(\Sdif;Y^{(i)}|\Seq) \label{eq:MI_sum1} \\
            &= \sum_{i=1}^n I(\Vdif^{(i)};Y^{(i)}|\Veq^{(i)}), \label{eq:MI_sum2}
    \end{align}
    where \eqref{eq:MI_sum1} is a standard property for independent observation models \cite[Eq.~(7.96)]{Cov01}, and \eqref{eq:MI_sum2} follows by defining  $(\Vdif^{(i)},\Veq^{(i)})$ to count the number of defective items in the $i$-th test at the non-revealed and revealed indices, and recalling from \eqref{eq:model} that $Y$ depends on the defective set $S = \Sdif \cup \Seq$ only through $V_S := \sum_{i \in S} X_i$.
    
    {\em Step 2 (Approximate Distributions by Binomials):}
    We proceed by showing that the pairs $(\Vdif^{(i)},\Veq^{(i)})$ have a distribution which is ``close enough'' to a product of Binomial distributions with the same probability parameter.  Since the defective set is uniformly random, the joint distribution of each pair $(\Vdif^{(i)},\Veq^{(i)})$ (and hence the mutual information $I(\Vdif^{(i)};Y^{(i)}|\Veq^{(i)})$) only depends on the number of non-zeros in the $i$-th row $X^{(i)}$ of $\Xv$, which we denote by $m^{(i)}$.
    
    Before proceeding, we recall that the Hypergeometric($k$, $m$, $p$) distribution counts the number of ``special items'' obtained when sampling $k$ items from a population of $p$ items \emph{without replacement}, $m$ of which are labeled as special.  A random variable with this distribution has probability mass function $P_H(i) = \frac{ {m \choose i} {p - m \choose k - i} }{ {p \choose k} }$.  Of course, sampling \emph{with replacement} simply gives the Binomial($k$, $m/p$) distribution.
    
    We have the following:
    \begin{enumerate}
        \item Recalling that $\Seq$ is uniform on the ${p \choose k-\ell}$ sets having cardinality $k-\ell$, the number of ones at the revealed indices is distributed as
        \begin{equation}
            \Veq^{(i)} \sim \Hg(k-\ell,m^{(i)},p).
        \end{equation}
        We approximate this by the Binomial random variable 
        \begin{equation}
            \Veq^{(i)} \sim \Bi\Big(k-\ell,\frac{m^{(i)}}{p}\Big).
       \end{equation}
         Specifically, denoting the corresponding distributions by $P_{\Veq}$ and $P_{\Veq'}$ respectively (omitting the superscripts $(\cdot)^{(i)}$), the total variation distance between the two satisfies \cite{Soo96}
        \begin{equation}
            \dTV(P_{\Veq},P_{\Veq'}) \le \frac{k-\ell-1}{p-1} = O\bigg( \frac{k-\ell}{p} \bigg). \label{eq:d1}
       \end{equation}
       We denote this upper bound by $\delta_1$.
        \item Suppose that we condition on some value $\veq$ of $\Veq^{(i)}$.  Recalling that $(\Sdif|\Seq=\seq)$ is uniform on the ${p-k+\ell \choose \ell}$ possible realizations, we have
        \begin{multline}
            (\Vdif^{(i)} \,|\, \Veq^{(i)} = \veq) \\ \sim \Hg(\ell, m^{(i)} - \veq,p - k + \ell).
        \end{multline}
        We approximate this by the conditional distribution
        \begin{equation}
            (\Vdif^{\dagger(i)} \,|\, \Veq^{(i)} = \veq) \sim \Bi\Big(\ell,\frac{m^{(i)} - \veq}{p - k + \ell}\Big),
        \end{equation}
        which we further approximate by the unconditional distribution
        \begin{equation}
            \Vdif^{\prime(i)} \sim \Bi\Big(\ell,\frac{m^{(i)} }{p}\Big).
        \end{equation}
          Specifically, the corresponding distributions satisfy \cite{Soo96}
        \begin{multline}
            \dTV(P_{\Vdif}(\cdot|\veq),P_{\Vdif^\dagger}(\cdot|\veq)) \\ \le \frac{\ell-1}{p-k+\ell-1} = O\bigg( \frac{\ell}{p} \bigg), \label{eq:d21}
       \end{multline}
       and (proved in the Appendix)
       \begin{equation}
           \dTV(P_{\Vdif^\dagger}(\cdot|\veq),P_{\Vdif'}) = O\bigg( \frac{\ell(k-\ell)}{p} \bigg) \label{eq:d22}
       \end{equation}
       uniformly in $m^{(i)}$ and $\veq$.  Denoting these bounds by $\delta_{2,1}$ and $\delta_{2,2}$, we obtain from the triangle inequality that
       \begin{equation}
           \dTV(P_{\Vdif}(\cdot|\veq),P_{\Vdif'}) \le \min\{1,\delta_{2,1}+\delta_{2,2}\} =: \delta_2,
       \end{equation}
       where the upper bound of one is trivial.
    \end{enumerate}

    {\em Step 3 (Infer Bounds on the Mutual Informations)}
    
    Next, we formalize the statement that if two joint distributions are close in TV distance, their (conditional) mutual informations are also close. Using the above definitions of $(\Vdif,\Veq)$, $(\Vdif',\Veq')$ and $(\delta_1,\delta_2)$, we have the following:
    \begin{enumerate}
        \item We prove in the Appendix that
        \begin{equation}
            \big| I(\Vdif;Y|\Veq) - I(\Vdif;Y|\Veq')\big| \le \delta_1 \log 2. \label{eq:Ibound1}
        \end{equation}
        \item We also prove in the Appendix that
        \begin{equation}
            \big| I(\Vdif;Y|\Veq') - I(\Vdif';Y|\Veq')\big| \le \delta_2 \log\frac{4}{\delta_2}. \label{eq:Ibound2}
        \end{equation}
        In fact, we show that the logarithmic term can usually be improved to a constant and sometimes even $o(1)$; see Remark \ref{rem:log_factor}.  We focus on the slightly looser bound \eqref{eq:Ibound2} for the sake of simplicity.
        \item Combining these with \eqref{eq:d1}, \eqref{eq:d21} and \eqref{eq:d22} gives
        \begin{multline}
            \big| I(\Vdif;Y|\Veq) - I(\Vdif';Y|\Veq')\big| \\ = O\bigg( \frac{\ell(k-\ell)}{p} \max\Big\{1, \log\frac{p}{\ell(k-\ell)} \Big\} \bigg). \label{eq:multi_letter}
        \end{multline}
    \end{enumerate}
    Substituting \eqref{eq:multi_letter} into \eqref{eq:Fano_initial} and \eqref{eq:MI_sum2}, and maximizing over $\ell$, we obtain the necessary condition
    \begin{equation}
        n \ge \max_{\ell} \frac{\log{ p-k+\ell \choose \ell }}{ \frac{1}{n}\sum_{i=1}^n I(\Vdif^{'(i)};Y^{(i)}|\Veq^{'(i)}) + \Delta_{\ell}  } (1+o(1)) \label{eq:Step2_bound}
    \end{equation}
    where $\Delta_{\ell} = O\big( \frac{\ell(k-\ell)}{p} \max\{1, \log\frac{p}{\ell(k-\ell)} \} \big)$.

    {\em Step 4 (Form a Single-letter Expression)}

    By defining a random variable $U$ equiprobable on $\{1,\dotsc,n\}$, we can write the average in the denominator of \eqref{eq:Step2_bound} as 
    \begin{equation}
        \frac{1}{n}\sum_{i=1}^n I(\Vdif^{'(i)};Y^{(i)}|\Veq^{'(i)}) = I(\Vdif';Y|\Veq',U),
    \end{equation}
    where the conditional distributions of $\Vdif'$ and $\Veq'$ given $U=i$ are independent Binomial random variables with $(\ell,k-\ell)$ trials and a common parameter $\frac{m^{(i)}}{p}$.  Thus, the overall bound becomes
    \begin{equation}
        n \ge \max_{\ell} \frac{\log{ p-k+\ell \choose \ell }}{ I(\Vdif';Y|\Veq',U) + \Delta_{\ell}  } (1+o(1)) \label{eq:Step3_bound}
    \end{equation}
    Upper bounding the right-hand side by maximizing over $P_{U}$ and $P_{X|U}$ yields \eqref{eq:cnv_modified}; once again, since the output depends on the measurement vector $X$ only through $\sum_{i \in S} X_i$, we can safely replace the Binomial random variables $(\Vdif',\Veq')$ by the corresponding i.i.d.~Bernoulli vectors $(X_{\seq},X_{\sdif})$ in the mutual information.  Further weakening \eqref{eq:cnv_modified} by swapping the min-max ordering yields \eqref{eq:cnv2}, thus concluding the proof of Theorem \ref{thm:weak}.
\end{proof}

\section{Conclusion}

We have provided two converse bounds for noisy group testing with arbitrary measurement matrices.  Our first result strengthens an existing result \cite{Mal78} to obtain a strong converse statement $\pe(\Xv) \to 1$, and our second result provides a (weak) converse with a potentially improved threshold.  In several cases, these converse bounds are known to be achievable using i.i.d.~matrices when $k$ scales sufficiently slowly compared to $p$ \cite{Sca15b}, and thus our results support the use of such matrices in these regimes.  In contrast, it is known that i.i.d.~matrices can be suboptimal in other settings, such as the linear scaling $k = \Theta(p)$ \cite{Ald15}.  In such cases, there may be room to improve the converse bounds presented in this paper.

Another direction for future work is to determine to what extent our bounds remain valid in the case of adaptive group testing, where each test can be designed based on past observations.  Some work in this direction is given in \cite{Joh15}, but the most conclusive results therein are limited to symmetric noise.

\appendix

\section{Appendix}

\subsection{Proof of \eqref{eq:d22}}

Recall that we are considering the TV distance between $(\Vdif^\dagger | \Veq = \veq) \sim \Bi\big(\ell,\frac{m^{(i)} - \veq}{p - k + \ell}\big)$ and $\Vdif' \sim \Bi\big(\ell,\frac{m^{(i)} }{p}\big)$.   We define the difference between the two binomial parameters as
\begin{equation}
   \Delta := \frac{m^{(i)}}{p} - \frac{m^{(i)} - \veq}{p - k + \ell}.
\end{equation}
By a simple asymptotic expansion and the fact that $\veq \in [0,k-\ell]$, this satisfies
\begin{equation}
    \Delta = O\bigg( \frac{k-\ell}{p} \bigg)
\end{equation}
uniformly in $m^{(i)}$ and $\veq$.  Moreover, the bound for comparing Binomial distributions in \cite[Eq.~(16)]{Roo01} states that 
\begin{equation}
    \dTV(P_{\Vdif^\dagger}(\cdot|\veq),P_{\Vdif'}) \le c\sqrt{\eta}(1+\sqrt{2\eta})e^{2\eta},
\end{equation}
where $c = (2\pi)^{1/4}e^{1/24}2^{-1/2} $ and $\eta = \Delta^2\ell(\ell+2) = O(\Delta^2\ell^2)$.  This upper bound behaves as $O(\sqrt{\eta}) = O\big( \frac{\ell(k-\ell)}{p} \big)$ whenever $\eta = O(1)$, thus establishing \eqref{eq:d22}.  If $\eta = \Omega(1)$, then \eqref{eq:d22} is trivial anyway, since it gives $\frac{\ell(k-\ell)}{p} = \Omega(1)$, but an upper bound of $1$ always holds.

\subsection{Proof of \eqref{eq:Ibound1}}

We obtain \eqref{eq:Ibound1} by writing
\begin{align}
    & \big| I(\Vdif;Y|\Veq) - I(\Vdif;Y|\Veq')\big| \nonumber \\
    &\qquad = \bigg| \sum_{\veq} \big( P_{\Veq}(\veq) - P_{\Veq'}(\veq) \big) I(\Vdif;Y|\veq)\bigg| \label{eq:Ibound_pf1} \\
    &\qquad \le \sum_{\veq} \big|  P_{\Veq}(\veq) - P_{\Veq'}(\veq)\big|   \log 2 \label{eq:Ibound_pf2}  \\
    &\qquad = d_{\TV}(P_{\Veq}, P_{\Veq'}) \log 2, \label{eq:Ibound_pf3} 
\end{align}
where \eqref{eq:Ibound_pf2} holds since the mutual information is upper bounded by $\log 2$ with binary outputs.

\subsection{Proof of \eqref{eq:Ibound2}}

Since the conditional mutual information is an average of unconditional mutual informations and \eqref{eq:d22} is uniform in $\veq$, it suffices to show that for any $P(x)$ and $Q(x)$ on some common alphabet $\Xv$, the inequality $d_{\TV}(P,Q) \le \delta$ implies $|I_P(X;Y) - I_Q(X;Y)| \le \delta\log\frac{4}{\delta}$.  Here the subscripts $P$ and $Q$ denote which distribution on $X$ is used, whereas the conditional distribution $W(y|x)$ of $Y$ given $X$ is the same in both cases.  We use similar notations for entropies, such as $H_P(Y)$ and $H_P(Y|X)$.

Since $I(X;Y) = H(Y) - H(Y|X)$, we have
\begin{multline}
    \big| I_P(X;Y) - I_Q(X;Y) \big| \\ \le \big| H_P(Y) - H_Q(Y) \big| + \big| H_P(Y|X) - H_Q(Y|X) \big|.
\end{multline}
For the second term, we follow \eqref{eq:Ibound_pf1}--\eqref{eq:Ibound_pf3} to deduce that 
\begin{equation}
    \big| H_P(Y|X) - H_Q(Y|X) \big| \le d_{\TV}(P, Q) \log 2.
\end{equation}
Moreover, the same reasoning along with the identities $P_Y(y) = \sum_{x} P_X(x)P_{Y|X}(y|x)$ and $P_{Y|X}(y|x) \le 1$  gives
\begin{equation}
    d_{\TV}(PW,QW) \le d_{\TV}(P, Q),
\end{equation}
where $PW$ denotes the $Y$-marginal of $P(x)W(y|x)$, and similarly for $QW$.  We may thus apply the result on the continuity of entropy in \cite[Ch.~2]{Csi11} to obtain
\begin{equation}
    \big| H_P(Y) - H_Q(Y) \big| \le d_{\TV}(P, Q) \log \frac{2}{d_{\TV}(P, Q)}. \label{eq:HY_bound}
\end{equation}
Combining the above estimates yields $|I_P(X;Y) - I_Q(X;Y)| \le \delta\log\frac{4}{\delta}$ whenever $d_{\TV}(P,Q) \le \delta$, as desired.

\begin{rem} \label{rem:log_factor}
    The logarithmic factor in \eqref{eq:HY_bound} can be replaced by a constant whenever $P$ and $Q$ yield probabilities of $Y = 0$ and $Y=1$ that are strictly bounded away from one.  This is because the entropy has bounded derivatives except as $P_Y(y)\to0$.  In fact, in the vicinity of $P_Y = \{0.5,0.5\}$ (which is relevant for symmetric settings), we may even make the bound in \eqref{eq:HY_bound} behave as $o(d_{\TV}(P, Q))$, since the derivative of the binary entropy function at $0.5$ is zero.
\end{rem}  

\section*{Acknowledgment}

This work was supported by the European Commission (ERC Future Proof), SNF (200021-146750 and CRSII2-147633), and `EPFL Fellows' program (Horizon2020 665667).

 \bibliographystyle{IEEEtran}
 \bibliography{../JS_References}

\end{document}